\documentclass[10pt]{article}
\makeatletter
\def\ps@headings{%
\def\@oddhead{\mbox{}\scriptsize\rightmark \hfil \thepage}%
\def\@evenhead{\scriptsize\thepage \hfil \leftmark\mbox{}}%
\def\@oddfoot{}%
\def\@evenfoot{}}
\makeatother
\pagestyle{headings}

\usepackage{graphicx,amsmath,amssymb,latexsym,epsfig,epstopdf, cite}
\usepackage{subfigure}
\usepackage{flushend}

\usepackage{amsthm}


\newtheorem{proposition}{Proposition}

\def\etal{{\em et al. }}

\title{Getting routers out of the core: Building an optical wide area network with ``multipaths''}

\author{Davide Cuda$^1$, Raluca-Maria Indre$^1$,\\
Esther le Rouzic$^1$,
James Roberts$^2$ \\
{\small \{davide.cuda,ralucamaria.indre,esther.lerouzic\}@orange.com,} \\
{\small james.roberts@inria.fr }\\
$^1$  Orange Labs, France \hspace{10mm}$^2$ INRIA, France}

\begin{document}

\maketitle

\begin{abstract}

We propose an all-optical networking solution for a wide area network (WAN) based on the notion of multipoint-to-multipoint lightpaths that, for short,  we call ``multipaths''. A multipath concentrates the traffic of a group of source nodes on a wavelength channel using an adapted MAC protocol and multicasts this traffic to a group of destination nodes that extract their own data from the confluent stream. The proposed network can be built using existing components and appears less complex and more efficient in terms of energy consumption than alternatives like OPS and OBS. The paper presents the multipath architecture and compares its energy consumption to that of a classical router-based ISP network. A flow-aware dynamic bandwidth allocation algorithm is proposed and shown to have excellent performance in terms of throughput and delay.
\end{abstract}

\section{Introduction}
\label{sec:introduction}
 As Internet traffic continues to grow at a fast pace, there is increasing concern that the current networking paradigm based on large electronic routers will not scale. Scalability in capacity is mainly limited by the power consumption of the core networking systems \cite{Yoo11}. Note also that the Internet counts for a large and increasing fraction of world energy consumption \cite{SMART}. Moreover, as traffic per user grows faster than the number of users, the contribution of the network core is expected to overtake that of the access by 2015 \cite{Kilper11}. Researchers are increasingly looking towards dynamic optical switching technologies as a solution to this problem \cite{ecoc2009}. 
 
Diverse approaches are being explored including optical packet switching (OPS), optical burst switching (OBS) and optical circuit switching (OCS). Unfortunately, the proposed solutions either remain futuristic for want of adequate photonics technology or are ill-suited to the burstiness and fine granularity of Internet traffic. Moreover, it appears doubtful that OPS in particular is in fact competitive with electronics in terms of energy consumption \cite{Tucker11}. On the other hand, optical technology has already been introduced successfully in the access network in the form of the passive optical network (PON).

A PON dynamically shares an optical channel, or lightpath, using time division multiplexing under the control of a medium access control (MAC) protocol. Signals in the form of bursts between end-users and their access node are combined or distributed  passively by means of an optical splitter. A similar lightpath sharing idea is the basis of time-domain wavelength interleaved networking (TWIN) where signals are passively merged and distributed in wavelength selective optical cross connects (OXC) \cite{WSGM03,Saniee09}. TWIN is suitable for an all-optical metropolitan area network (MAN). The present paper proposes an original architecture that assembles shared lightpaths to form a wide area network (WAN).

We propose to build an optical WAN using multipoint-to-multipoint lightpaths that, by contraction, we call ``multipaths''. Multipaths are created using OXCs that either merge, split or distribute the wavelengths of the interconnected fibres. Sharing of the multipath is controlled using an original MAC protocol, inspired by the polling schemes used in PONs and executed by a so-called multipath controller.  Multipaths are used to interconnect electronic edge routers that either concentrate the traffic of end-users or provide access to neighboring networks and data centres.  We explain how controlled multipath sharing completely removes the need for electronic routers in the core, bringing significant reductions in energy consumption. 

As for PONs, multipath sharing can be controlled using a wide range of MAC protocols, each one designed to fulfill the requirements of a specific type of network. In this paper we propose one particular MAC that we believe is suitable for the WAN of an Internet service provider (ISP). It uses a grant-report polling scheme similar to that of the Ethernet PON (EPON). As for EPON, the MAC offers flexibility in the way reports are formulated and grants are computed. We further propose a particular original flow-aware dynamic bandwidth allocation (DBA) algorithm and demonstrate its excellent performance characteristics using simulation.

In the next section we define the multipath. Section \ref{sec:energy} shows how multipaths can be used to build a large WAN typical of a countrywide ISP network and evaluates the resulting energy savings. The proposed MAC protocol and DBA algorithm are described in Section \ref{sec:dba}. Section \ref{sec:performance}  presents the results of the performance evaluation.
 
\section{The multipath}
\label{sec:multipath}
We first discuss related work on lightpath sharing before presenting the multipath and its control mechanisms. We then explain how this structure can be used to create a wide area network without the need for transit routers.

\subsection{Sharing lightpaths}
\label{sec:sharing}

WDM is currently used to provide high-capacity point-to-point links in the form of {\it lightpaths}.  A lightpath consists of a wavelength channel carried over a succession of fibres interconnected by wavelength selective optical cross connects (OXC). As wavelengths have a capacity of 10 Gb/s or more, a lightpath is only used efficiently when its end points concentrate a large amount of traffic. Traffic is usually concentrated through a hierarchy of electronic transit routers. Lightpath sharing can also be performed in the optical domain by employing OXCs to merge or split signals on the same wavelength. 

Point-to-multipoint sharing of lightpaths is realized using splitters to disseminate the wavelength signal over multiple outgoing fibres. A proposal to use optical time division multiplexing is described by Petracca \etal \cite{PMLN03}. Each destination extracts signals in its dedicated time slots. A more flexible alternative would be to require each destination to convert the composite optical signal to electronic form before sorting the packets received and retaining only its own. 

Multipoint-to-point lightpaths are realized by merging wavelength signals on two or more incoming fibres to a single outgoing fibre. Chlamtac and Gumaste \cite{CG03} and Bouabdallah \cite{Bou07} share an optical bus in this way using centralized and distributed MAC protocols, respectively, designed to avoid collisions at the merge points. A more flexible shared lightpath in the form of a concentration tree is the basis of time-domain wavelength interleaved networking (TWIN) introduced by Widjaja \etal \cite{WSGM03}.

A distributed MAC protocol for TWIN, proposed by Sani\'ee and Widjaja in \cite{Saniee09}, is particularly well-suited for building a metropolitan area network (MAN)  (see also \cite{RR11}). Each destination is the root of a multipoint-to-point lightpath and controls sharing by implementing an adapted MAC protocol. Although generalizations have been envisaged, the basic TWIN architecture is limited to MANs for scalability reasons. In addition, the distributed MAC of \cite{Saniee09} or \cite{RR11} is satisfactory only when exchanges between source and destination are timely, which is not physically possible for all nodes in a WAN. 

Note finally that passive optical networks (PONs) also share lightpath capacity between a set of access nodes \cite{G9843, ieee802.3}. PONs use point-to-multipoint lightpaths downstream and multipoint-to-point lightpaths upstream. A proposal by Roger \etal \cite{RON09} would create a MAN by suitably assembling PONs and coordinating their MAC protocols. Our own proposition described next is similar in principle while extending coverage to the wide area.

\subsection{Multipoint-to-multipoint lightpaths}

Multipaths extend the reach of TWIN by terminating the concentration tree at a root close to the source nodes and adjoining a distribution tree to bring the signal from this root to a set of destination nodes. This structure is illustrated in Figure \ref{fig:multipath}.

\begin{figure}[htp]
\centering
\includegraphics[width=.7\columnwidth]{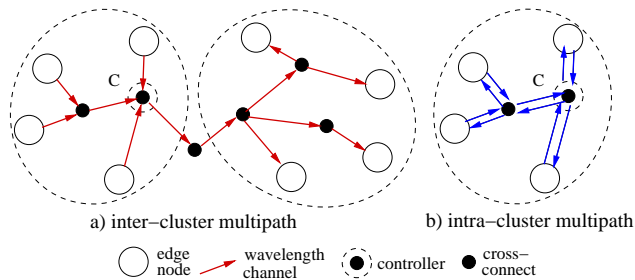}
\caption{Multipoint-to-multipoint lightpaths: multipaths}
\label{fig:multipath}
\end{figure}

In the left hand figure, the multipath interconnects a cluster of source nodes on the left to a cluster of destinations on the right. Access is governed by a controller located at the root of the concentration tree. To ensure timely signalling exchanges, the source nodes are   assumed to be geographically close to their controller (within 100 km, say). Optical signals are transmitted \emph{transparently} from source to destination. Each destination receives a composite signal and extracts its own packets after conversion to electronic form.  Destinations may be a long way from the source cluster and signal amplification and regeneration may be required. 

The right hand figure depicts a multipath that interconnects the nodes of the same cluster. Each node is assumed to be equipped with at least one incoming and one outgoing fibre so that the same wavelength can be used in both directions. This intra-cluster multipath in effect constitutes a MAN.

Optical burst transmissions in inter- and intra-cluster multipaths are coordinated by a controller denoted $C$ in the figure, located at the root of the concentration tree.

\subsection{Multipath control}

\label{sec:coll}

Controller $C$ is responsible for allocating multipath time slots to sources that avoid collisions. It is easy to see that slots cannot collide at any OXC in the multipath if they are timed not to collide at $C$. A MAC protocol that avoids such collisions while using lightpath capacity to the full is described in Section \ref{sec:dba} and evaluated in Section \ref{sec:performance} below. This protocol is inspired by the EPON MAC layer \cite{ieee802.3} that was previously applied also to TWIN \cite{RR11}. The present protocol differs significantly, however, since it is no longer appropriate here to piggyback control signals on the multipath data channel.

Data are transmitted transparently end-to-end in the form of optical bursts. To perform control operations, on the other hand, it is necessary to convert signals to the electronic domain. We therefore constitute dedicated upstream and downstream control channels between controller and source nodes. The upstream channel allows source nodes to communicate their requirements. The downstream control channel is used to inform sources of their  allocated time slots.

The same control channel can be used for all multipaths having the same controller. This motivates the decision to group sources in clusters and to constitute sets of multipaths whose sources are all contained in the same cluster. Multipath destinations can be chosen freely, however. 

The multipaths of the same cluster together with their upstream control channel must use exactly the same fibres. This is necessary for synchronization and ranging operations that are performed over the control channel. It is thus necessary that this channel have the same propagation time as the multipaths it controls (to within the limits implied by chromatic dispersion).

To centrally control a group of multipaths has a further advantage. It is possible to coordinate time slot allocations to avoid transmitter blocking. This phenomenon occurs in TWIN when destinations independently issue grants to a source without regard to the fact that the source may only be equipped with a single tunable transmitter \cite{Saniee09,RR11}. The cluster controller is aware of all assigned grants and can avoid collisions both at the source and on the multipath.

Finally, to use dedicated control channels allows the controller to more closely monitor source activity and notably, to avoid needlessly assigning slots to sources that momentarily have no packets to emit. Such slots are necessary with piggybacked signalling to allow the source to inform the controller in a timely manner of new arrivals to its queues.

\subsection{Building a WAN}

In Section \ref{sec:energy} below we consider how multipaths enable one to remove some 92 electronic core routers from a hypothetical country-wide ISP network. In the present section we aim simply  to demonstrate the scalability of multipaths using a toy network example. 

 \begin{figure}[htp]
 \centering
 \includegraphics[width=.7\columnwidth]{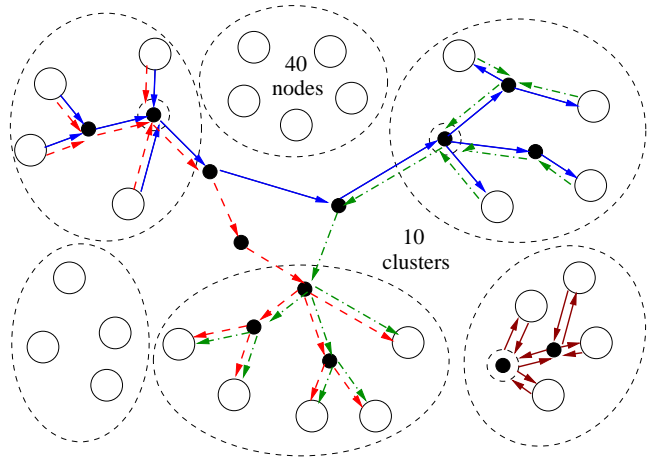}

 \caption{$10$ cluster network showing just 4 multipaths out of $100$}
 \label{fig:multipathnet}
 \end{figure}

We consider a symmetrical network of 400 edge nodes. Each node exchanges 2 Gb/s exclusively with the other nodes. We constitute 10 clusters of 40 nodes each. Both inter- and intra-cluster traffic is less than 8 Gb/s and can be handled by the 10 Gb/s optical channel of a single multipath.  The 400 nodes are thus interconnected using a total of 100 multipaths. Figure \ref{fig:multipathnet} illustrates some of the multipaths of this network.

Each node must be able to receive 10 wavelengths and requires a dedicated receiver for each. It must also be able transmit on 10 wavelengths but a single tunable transmitter is sufficient since total outgoing traffic is only 2 Gb/s.
 Each node must also have access to the control channels of its source cluster.  

In contrast, if the network were composed of point-to-multipoint or multipoint-to-point lightpaths, it would be necessary for each node to receive or transmit, respectively, on a total of 399 distinct wavelengths. It is necessary to terminate a large number of outgoing fibres even though overall demand is well within the capacity of a single channel. Nodes must also be equipped with a large number of transmitters or receivers.

\subsection{Managing a multipath network}

Multipaths can be rapidly set up and torn down in the optical infrastructure using a control plane like GMPLS \cite{GMPLS}. Reliability can thus be assured using familiar routing and wavelength assignment techniques although additional provision would be required to ensure adequate controller redundancy. 

A control plane like GMPLS would makes it relatively easy to reconfigure the multipaths so as to meet changing traffic requirements. The capacity of the network could be reduced in times of light traffic by using bigger clusters. The total number of multipaths would then be reduced and redundant equipment could be switched off. More significant gains in energy consumption result from using multipaths in place of electronic transit routers, as considered next.

\section {Realizing a large ISP multipath network}
\label{sec:energy}

In order to highlight the multipath architecture's potential, we demonstrate how one could take the routers out of the core of a representative ISP network. Classical and multipath networks are compared in terms of their energy consumption.   We then discuss the scope for multipaths to meet anticipated future growth in Internet traffic.

\subsection{An ISP reference network}

The considered network handles some 4 Tb/s of traffic and is intended to be representative of a large national ISP. It interconnects 420 edge nodes and provides access to the Internet via 4 gateways. Additionally, 2 peering points provide interconnection to peers or data centres.

We make the following assumptions on exchanged traffic in the busy period. Each edge node sends a total of 1 Gb/s  to the other edge nodes within the WAN. This is assumed to be distributed uniformly over all other edge nodes. Edge nodes also send 1 Gb/s to the Internet via one of the 4 equivalent gateways. Traffic outgoing via the 2 peering points is assumed relatively very small. Nodes receive 4 Gb/s  from the Internet via one of the 4 gateways and 2 Gb/s of traffic from the 2 peering points combined. 

 \begin{figure}[htp]
 \centering
 \includegraphics[width=.7\columnwidth]{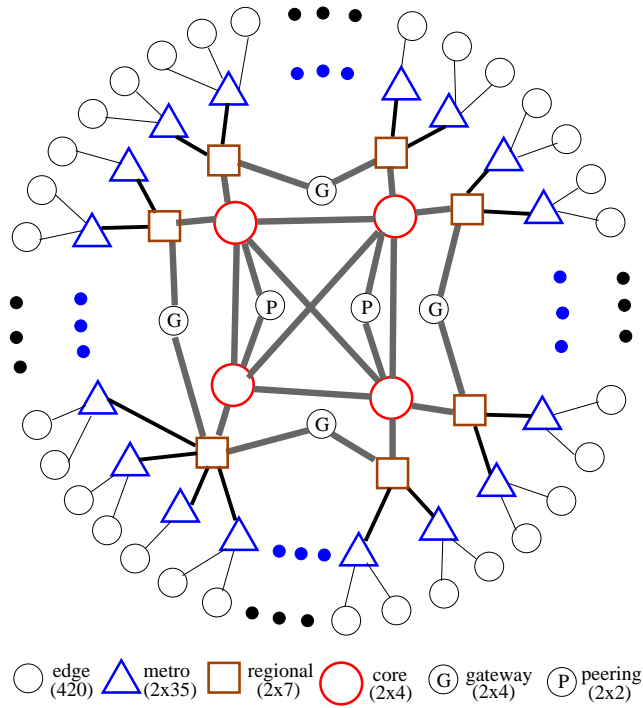}
 \caption{ISP network: all nodes except user edges are duplicated}\label{fig:ISP}
 \end{figure}

 \begin{figure}[htp]
 \centering
 \includegraphics[width=.7\columnwidth]{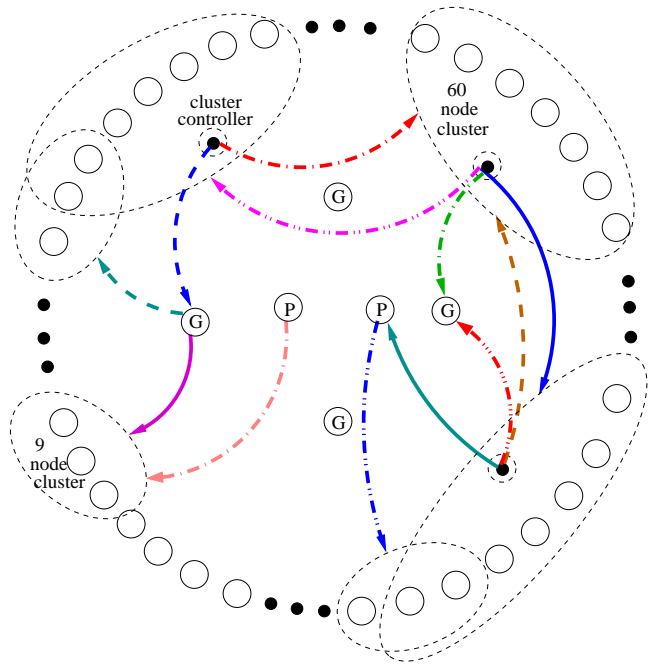}
 \caption{ISP network realized using multipaths showing a sample of paths}
 \label{fig:ISPmulti}
 \end{figure}

The reference network is illustrated in Figure \ref{fig:ISP}. Edge routers are interconnected via three hierarchical layers of high capacity transit routers. Internet gateways are connected at the level of regional transit nodes. Peering is realized via the top level core routers. For reasons of reliability, each transit node, gateway and peering point is duplicated (the figure shows only one member of each pair). The routers are interconnected by point-to-point links as follows: 

\begin{itemize}
\item[$\bullet$] 10 Gb/s from edge node to  metro node,
\item[$\bullet$] 80 Gb/s from metro node regional node, 
\item[$\bullet$] 200 Gb/s from regional node to core node,
\item[$\bullet$] 200 Gb/s from core node to core node,
\item[$\bullet$] 280 Gb/s from gateway to regional node,
\item[$\bullet$] 240 Gb/s from peering point  to core node. 
\end{itemize}

Links are sized to support the total traffic originating from or destined to the edge nodes with redundant dual connections to ensure reliability. Note that a total of 92 routers (8 core, 14 regional and 70 metro routers) are required to provide full connectivity to the 420 edge nodes. 

\subsection{Multipath network}
\label{sec:ispmulti}
Consider now how connectivity can be provided using 10 Gb/s multipaths. We maintain edge nodes, gateways and peering points and remove the three transit levels.  Traffic assumptions are as above. 
Each edge node is connected to the optical infrastructure by one incoming and one outgoing fibre. It is equipped with one 10 Gb/s tunable transmitter and one receiver for each multipath it terminates. We assume the performance of each multipath is satisfactory as long as its load is less than 90\% (see Section \ref{sec:performance}). 

Figure \ref{fig:ISPmulti} represents a sample of multipaths, shown here as arcs. The endpoints of each arc are connected to nodes in the corresponding cluster by a tree network, as previously described, but the branching is omitted for the sake of simplicity. 

For internal traffic, seven clusters are constituted with 60 nodes each and  destination clusters coincide with source clusters. The traffic from one cluster to another is then 8.5 Gb/s and can be handled by one multipath. Intra-cluster multipaths have slightly less traffic than 8.5 Gb/s. 

Incoming Internet traffic is handled by point-to-multipoint multipaths from a gateway or peering point to a cluster of edge nodes. Since each edge node receives 4 Gb/s of traffic from the Internet, clusters of 2 nodes result in a traffic of 8 Gb/s per multipath. Each peering point sends 1 Gb/s of traffic to each node which results in clusters of 9 nodes with a traffic of 9 Gb/s per multipath. 

Multipaths for outgoing Internet traffic connect each 60-node source cluster to a gateway. The 60 Gb/s of outgoing cluster traffic can be handled by 7 such multipaths, distributed over the 4 gateways. The small amount of traffic towards each peering point can be handled by one further multipath from each source cluster. 

As in the ISP network, gateways and peering points are duplicated for reliability. The same multipaths can be used to interconnect clusters with each pair, only one gateway or peering point being active at any time. Overall network reliability is assured at the optical infrastructure level. This is possible since multipaths can be reconfigured rapidly in the event of failure. We do not account explicitly for redundant fibre and cross-connects since this is more or less common to both transit router and multipath architectures and consumes relatively little energy.

Edge nodes only generate 2Gb/s of outgoing traffic so a single tunable transmitter is sufficient for 16 multipaths (6 inter-cluster, 1 intra-cluster, 7 gateway and 2 peering point paths) and the control channel. They must be equipped to receive a total of 9 multipaths (6 inter-cluster, 1 intra-cluster, 1 gateway and 2 peering point paths) and one control channel, i.e., 10 receivers in all. Each gateway or peering point must be able to emit on 53 different channels and to receive up to 13 different multipaths. 

\subsection{Comparing energy consumption}

To estimate the power consumption of the ISP network, we use the model proposed in \cite{Filip}, taking data from vendor-supplied specifications. Table \ref{tab:routers} gives our power consumption estimates for the different network nodes. 
The indicated power consumption includes air conditioning which we consider to contribute 33\% of the device consumption. 

Overall consumption amounts to 6.4 MW of which 1.7 MW is due to transit switching. The latter is economized in the multipath architecture at the cost of some additional optical equipment. The main item is the set of receivers with which edge nodes 
must be equipped, one for each multipath. 

The total number of receivers required in the network of Section \ref{sec:ispmulti} is 4200. Assuming a consumption of 17.5 W per receiver (half the consumption of a transceiver) \cite{transp} yields only 73,5 kW. The energy cost of a controller should be equivalent to that of an EPON OLT that we estimate at 100 W \cite{OLT} and is thus negligible. 

It is not obvious that the multipath architecture would require more optical infrastructure. However, we consider any corresponding increase in consumption to be negligible since that of an OXC is typically not more than 2.5 kW \cite{OXC}. In the multipath architecture, gateways generate and terminate a slightly higher number of lighpaths which increases their consumption by 15\%. 

In conclusion, the multipath architecture reduces consumption in the considered ISP network by some 27\%. The gain is clearly much more significant if we compare only the part that changes: multipaths consume around 100 kW compared to the 1.7 MW energy cost of electronic transit routers.

\begin{table}[ht]
\begin{center}
\caption{\label{tab:routers} Router power consumtion.}
\begin{tabular}{| c | c | c | c | } \hline
Node &  Power  &  Number  &  Total \\
\hline
\hline
core & 47 kW & 8 & 376 kW \\
\hline
regional & 44 kW & 14 & 616 kW  \\
\hline
metro & 10 kW & 70 & 700 kW \\
\hline
edge & 10 kW & 420 & 4200 kW  \\
\hline
gateway & 45 kW & 8 & 360 kW \\
\hline
peering point & 38 kW & 4 & 152 kW  \\
\hline

\end{tabular}
\end{center}
\end{table}

\subsection{A future proof technology}

Consider now the potential of multipaths to cope with growing demand in the Internet. One could increase the capacity of each wavelength channel or  increase the number of channels per fibre. 

Suppose a tenfold increase of the traffic generated by each edge node of the ISP network. The same set of multipaths would be sufficient if each operates at 100 Gb/s instead of 10 Gb/s. Note that 100 Gb/s optical transmission systems are expected to become commercially available in the near future \cite{100G}.  

A more cost-effective approach would be to retain the same 10 Gb/s transmitters and receivers and to create smaller clusters. One might alternatively allocate more than one wavelength channel to the same source-destination cluster pair. Despite the higher number of wavelength channels required, this solution appears perfectly feasible with present day WDM systems which have a count of up to 160 10 Gb/s wavelength channels per fiber \cite{WDM}.

\section {Dynamic allocation of multipath bandwidth}
\label{sec:dba}
The multipath architecture can be used with a wide variety of medium access control protocols designed to satisfy a range of performance requirements. In this section we propose one particular MAC that we believe is suitable for an ISP network like that described above. 

\subsection{Report-grant signalling}
\label{sec:sig}

We propose a report-grant polling scheme inspired by that used for dynamic bandwidth allocation (DBA) in EPON. Multipath source nodes inform the controller of their current requirements (e.g., packet queue lengths) and the controller grants them appropriately sized time slots that avoid collisions. These time slots are specified by their start time and duration. Unlike EPON and the MAN considered in \cite{RR11}, the multipath architecture uses dedicated control channels for reports and grants for all the multipaths of a source cluster.

We assume the control channel implements an optical TDM (OTDM) scheme with fixed slots dedicated to each source-multipath pair. Supposing each report requires 128 B (EPON reports are 64 B), that the number of nodes is 60 and the controller manages 16 multipaths (data from Section \ref{sec:energy}), the signalling requirement is roughly 125 KB per cycle. A control channel rate of 1 Gb/s would thus bring a maximum signalling delay of 1 ms. A downstream channel of the same rate would allow grants to be specified in 128 B which appears largely sufficient (it certainly is for the scheme we propose below).

\subsection{Synchronization and ranging}

Grant timing must take proper account of the different propagation times between controller and source nodes. As the optical multipath channels are transparently switched from source to destination, synchronization must be performed over the control channels whose signals are converted to electronic form at the controller. Since these use exactly the same fibres as the multipaths, their propagation times are identical to within tight limits accounting for chromatic dispersion.

To synchronize clocks and measure propagation times we adapt the time stamp exchange scheme of EPON, as in \cite{RR11}. Each grant signal includes a time stamp $t_1$ corresponding to the time it was sent according to the controller clock. On receipt, the source sets its own clock to the value $t_1$ (it is thus slow with respect to the controller clock by a one-way propagation time). When the source sends a report it includes the time it was sent according to its local clock  $t_2$. The controller receives this report at its local time $t_3$. It may readily be verified that the round trip time is then equal to $t_3 -t_2$. 

The controller thus measures precisely the round trip time $\textsc{rtt}_i$ between itself and each source $i$ of its cluster. Each source maintains a single local clock that is valid for all multipaths managed by the controller. The algorithm for computing grant epochs described next needs only the $\textsc{rtt}_i$ and knowledge that source clocks are slow by precisely the one-way propagation time.

In computing transmission slot starting times it is necessary to account for the random delay taken for the grant to reach the source. This includes the interval between the moment the grant is formulated and the time it can be inserted in the appropriate slot of the control channel OTDM frame. This is bounded by the cycle time (1 ms in the example considered in the previous subsection).

\subsection{Grant allocation}

\label{sec:allocation}
Grant timing must account for a guard time between burst emissions. This is necessary in particular to allow re-tuning of transmitter lasers. We denote the guard time by $\Delta_g$. Note, for instance, that  GPON standards specify a guard time of less than 100 nanoseconds. 

It is also necessary to account for grant signalling delays and for imprecise synchronization (due to chromatic dispersion or coarse clock granularity, for example). We suppose the sum of delay and timing errors are less than some value $\tau$. From previous discussion, a value $\tau =$ 1 ms is not unreasonable.   

The process of grants emitted by the controller to the source nodes of a given multipath $j$ is specified by the functions $g_j(.)$, $s_j(.)$ and $d_j(.)$ defined as follows. The $n^{th}$ grant sent for multipath $j$ to some source is formulated by the controller at time $g_j(n)$ and instructs the source to transmit for duration $d_j(n)$ starting at  {\it source local time} $s_j(n)$. Assume the $(n+1)^{th}$ grant is issued to  source $i$.  Epochs $g_j$ and $s_j$ are calculated recursively, as specified in Proposition \ref{prop:recursion}.

\begin{proposition}
\label{prop:recursion}
The following recursions define a schedule that is feasible and ensures that multipath $j$ is fully utilized, on condition that a transmitter is available at the designated start time:  
 \begin{eqnarray}
  g_j(n+1) &= &g_j(n) + d_j(n) + \Delta_g, \label{eq:g-update}\\
 s_j(n+1) &= &g_j(n+1) + \Delta_O  - \textsc{rtt}_{i}, \label{eq:s-update}
\end{eqnarray}
where $\Delta_O$  is an offset satisfying $\Delta_O \geq  \max_i(\textsc{rtt}_{i})+\tau$.
\end{proposition}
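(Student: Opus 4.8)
The plan is to reduce everything to the arrival instants of bursts at the controller $C$, because the remark in Section~\ref{sec:coll} lets us certify a collision-free schedule on the whole multipath once we certify it only at $C$. First I would fix the timing model implied by the ranging scheme. Writing $\delta_i$ for the one-way controller-to-source propagation time, the synchronization procedure makes each source clock slow by exactly $\delta_i$, so a source local time $s$ corresponds to absolute (controller) time $s+\delta_i$; since the control channel shares the multipath fibres, $\textsc{rtt}_i = 2\delta_i$ and the source-to-$C$ data propagation time is again $\delta_i$. There are then two things to establish: that the bursts neither overlap at $C$ nor require a grant to arrive too late (\emph{feasibility}), and that no channel capacity is wasted beyond the unavoidable guard times (\emph{full utilization}).

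The crux is a single computation. A burst served by grant $n$ to its source $i$ leaves at absolute time $s_j(n)+\delta_i$ and reaches $C$ after a further $\delta_i$, i.e. at absolute time
\begin{equation*}
A_j(n) \;=\; s_j(n) + 2\delta_i \;=\; s_j(n) + \textsc{rtt}_i .
\end{equation*}
Substituting the recursion \eqref{eq:s-update} makes the source-dependent term cancel:
\begin{equation*}
A_j(n) \;=\; g_j(n) + \Delta_O - \textsc{rtt}_i + \textsc{rtt}_i \;=\; g_j(n) + \Delta_O ,
\end{equation*}
so the arrival instant at $C$ depends only on $g_j(n)$ and the common offset $\Delta_O$, \emph{not} on which source is granted. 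This is exactly what lets consecutive bursts be packed tightly even when they originate at sources with very different round-trip times.

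Feasibility and full utilization then follow from \eqref{eq:g-update}. Combining the two displays gives $A_j(n+1) = A_j(n) + d_j(n) + \Delta_g$, so at $C$ burst $n$ occupies $[A_j(n),\,A_j(n)+d_j(n)]$ and burst $n+1$ begins precisely $\Delta_g$ later: consecutive bursts are separated by exactly the guard time, which proves simultaneously that they never collide (feasibility at $C$, hence throughout the multipath) and that the only idle time is the mandatory $\Delta_g$ (full utilization). It remains to use the hypothesis on $\Delta_O$ for causality. The grant is formulated at controller time $g_j(n)$ and, allowing for OTDM insertion delay and timing error bounded by $\tau$ together with propagation $\delta_i$, reaches the source by absolute time $g_j(n)+\delta_i+\tau$, whereas the source is instructed to start at absolute time $s_j(n)+\delta_i = g_j(n)+\Delta_O-\delta_i$. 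Requiring the start to be no earlier than the arrival reduces to $\Delta_O \ge 2\delta_i+\tau = \textsc{rtt}_i+\tau$, and imposing this for every source is exactly the stated condition $\Delta_O \ge \max_i(\textsc{rtt}_i)+\tau$. The proviso on transmitter availability simply records that this per-multipath schedule presumes the single tunable transmitter at source $i$ is free at the designated instant, a cross-multipath constraint the controller resolves separately.

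I expect the main obstacle to be purely the bookkeeping of reference frames: keeping straight the one-way time $\delta_i$ versus the round trip $\textsc{rtt}_i$, the sign of the clock offset, and the distinction between source local time and absolute time. Once the identity $A_j(n)=g_j(n)+\Delta_O$ is in hand the conclusion is immediate, so the genuine content lies in verifying that the $-\textsc{rtt}_i$ term in \eqref{eq:s-update} is precisely the correction that renders arrival times at $C$ source-independent.
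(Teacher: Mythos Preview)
Your argument is correct. The paper itself does not spell out a proof of Proposition~\ref{prop:recursion}; it simply remarks that ``an equivalent proposition is proved in \cite{RR11}'' and moves on. What you have written is exactly the natural reconstruction: translate source local time to controller time via the ranging convention, observe that the $-\textsc{rtt}_i$ in \eqref{eq:s-update} cancels the propagation so that arrival at $C$ satisfies $A_j(n)=g_j(n)+\Delta_O$ independently of the source, and then read off from \eqref{eq:g-update} that successive bursts at $C$ are separated by exactly $\Delta_g$. Your causality check for the $\Delta_O$ bound is also the right one. There is nothing to correct.
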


An equivalent proposition is proved in \cite{RR11}. The choice of source $i$ for the $(n+1)^{th}$ grant is not specified in the above recursion. If the choice is made independently for each multipath, it is possible that grants to a given source may overlap and lead to blocking when the number of overlaps is greater than the number of tunable transmitters with which the source is equipped. Such blocking was shown in \cite{RR11} to lead to significant loss of traffic capacity. In the present multipath architecture, blocking can be avoided since all grants to a given source are assigned by the same controller. The controller can select a source $i$ for the  $(n+1)^{th}$ grant that is known to have an available transmitter. This is stated in the following proposition.

\begin{proposition}
\label{prop:coordination}
Assuming the total number of transmitters, summed over all sources, is not smaller than the number of multipaths, the following procedure avoids transmitter blocking. For each $i$, let $\mathsf{Free}_i(t)$ be the time according to its local clock that source $i$ transmitter $t$ becomes free. Let $t_i^* = \arg\min(\mathsf{Free}_i(t))$. The controller attributes the $(n+1)^{th}$ grant for multipath $j$ to some source $i$ for which
$s_j(n+1) \ge {Free}_i(t_i^*)$
and sets
\begin{equation}
\mathsf{Free}_i(t_i^*)= s_j(n+1)+d_j(n+1)+\Delta_g.  \label{eq:end-update}
\end{equation}
\end{proposition}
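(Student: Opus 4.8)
The plan is to verify two claims about the greedy procedure: first, that it is always \emph{executable}, i.e.\ at each grant there really exists a source $i$ whose fastest-freeing transmitter satisfies $s_j(n+1) \ge \mathsf{Free}_i(t_i^*)$; and second, that the assignment it produces never double-books a transmitter. The second obligation is the easy induction. The selection rule guarantees the new burst starts no earlier than the recorded free time of the chosen transmitter, while update \eqref{eq:end-update} advances that free time to the burst's end plus $\Delta_g$; so by induction the bursts placed on any one transmitter occupy pairwise disjoint, guard-separated intervals, which together with Proposition \ref{prop:recursion} (ensuring the channel itself is collision-free) gives a feasible schedule. The substance of the argument therefore lies in the existence claim, and that is where I would spend the effort.

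For existence I would run a counting/pigeonhole argument on the common multipath-channel timeline (the controller's global clock). By recursion \eqref{eq:g-update}, consecutive bursts of a single multipath $j$ are spaced by exactly $\Delta_g$, so on multipath $j$'s channel the bursts occupy pairwise disjoint intervals; modelling each burst as the half-open interval ``burst plus trailing guard'' makes the $n^{th}$ and $(n{+}1)^{th}$ bursts of $j$ merely touch, hence still disjoint. Consequently at most one burst per multipath is in flight at any instant, so across the whole cluster at most $M$ bursts — where $M$ is the number of multipaths — are simultaneously active, and each active burst ties up exactly one transmitter. When the controller schedules the $(n{+}1)^{th}$ burst of multipath $j$, \eqref{eq:g-update} places its channel interval strictly after the $n^{th}$, so multipath $j$ is itself idle at that instant and at most $M-1$ of the remaining multipaths are active; the hypothesis that the total transmitter count is at least $M$ then leaves at least one transmitter free. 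Taking $t_i^* = \arg\min_t \mathsf{Free}_i(t)$ at each source is exactly the standard earliest-finish rule that locates such a free transmitter, which is the source the controller selects. Equivalently, one may phrase this as interval colouring: the channel intervals form an interval graph of clique number at most $M$, so $M$ colours (transmitters) suffice and the greedy earliest-free rule attains this bound.

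A point needing care is that $\mathsf{Free}_i(\cdot)$ and $s_j(n+1)$ are expressed in each source's local clock, which runs slow by that source's one-way propagation time, whereas the counting lives on the global clock. I would reconcile this by observing that all bursts emitted by a single source carry the same constant offset, so for bursts at one transmitter, overlap in local emission time holds if and only if the corresponding channel intervals overlap; the per-transmitter non-overlap condition is thus invariant under the change of clock, and the test $s_j(n+1) \ge \mathsf{Free}_i(t_i^*)$ is meaningful because both sides are read in source $i$'s own clock. The offset $\Delta_O \ge \max_i(\textsc{rtt}_i)+\tau$ of Proposition \ref{prop:recursion} guarantees, in addition, that every computed start time lies far enough in the future to be reachable, so discharging the ``transmitter available'' proviso of that proposition turns it into an unconditional feasibility statement.

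I expect the main obstacle to be making the pigeonhole step fully rigorous: translating the local-clock recursion of Proposition \ref{prop:recursion} into the clean invariant ``at most $M$ transmitter-busy intervals overlap at any instant'' via the constant-offset observation, and confirming that the earliest-free transmitter chosen is genuinely eligible to serve multipath $j$. The bound ``one burst per multipath per instant'' is intuitively immediate, but its honest justification couples the guard-time separation from \eqref{eq:g-update} with the clock reconciliation above and with the convention, implicit in the cluster architecture, that any source of the cluster may carry any of the cluster's multipaths; it is precisely at this junction that an informal treatment could conceal a gap.
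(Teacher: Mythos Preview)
Your proposal is correct and follows essentially the same line as the paper's (very terse) proof: existence of an eligible source is obtained by pigeonhole on the hypothesis that the transmitter count is at least the number of multipaths, and sufficiency of checking only $\mathsf{Free}_i(t_i^*)$ rests on the observation that, for a fixed source $i$, the offset $s-g=\Delta_O-\textsc{rtt}_i$ is the same across all multipaths, so local-clock overlap at the transmitter coincides with overlap of the corresponding controller-timeline intervals. Your interval-colouring reformulation and explicit clock-reconciliation step are elaborations of these two points rather than a different route.
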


\begin{proof}
The condition on the number of transmitters ensures that at least one source, $i$ say, has a free transmitter at its local time $s_j(n+1)$ implying that $s_j(n+1) \ge {Free}_i(t_i^*)$. The condition on  $\mathsf{Free}_i(t)$ is sufficient since only the last attributed grant for each $t$ can possibly interfere with the current $(n+1)^{th}$ grant. This is so since the offset between grant time $g$ and start time $s$ for a given source $i$ is the same for all multipaths.
\end{proof}

Note that several sources may satisfy the condition $s_j(n+1) \ge {Free}_i(t_i^*)$.  In the implementation evaluated in Section \ref{sec:performance} we have chosen the first feasible source with positive demand found when testing in cyclic order with random starting point.

\subsection{Flow-aware allocations} \label{subsec:flow}

It is possible to design a large variety of DBA algorithms that exploit the above protocol. In this paper we only consider one such algorithm, the per-flow fair sharing algorithm introduced in \cite{RR11}.

We suppose source nodes reliably identify flows using packet header fields and implement per-flow fair queueing. In fact, we suppose in the evaluations in Section \ref{sec:performance} that the scheduler distinguishes two kinds of flows, backlogged flows and non-backlogged flows and gives priority to packets of the latter. The adaptation of deficit round robin described by Kortebi \etal \cite{KOR05} called PDRR (for priority deficit round robin) realizes this distinction\footnote{In fact, a regular DRR scheduler could be used with broadly similar performance results.}. 

The scheduler automatically determines which flows are backlogged at any time and is also aware of the size in bytes of the priority queue for packets from non-backlogged flows. These data are used to constitute reports. When a report is due, the source communicates the current number of backlogged flows and the size of the priority queue. 

When grants are formulated, the controller accounts for all reports received since the last grant for that source and that multipath and sets the grant size equal to the transmission time of the priority queue content plus one quantum for each backlogged flow. The quantum is typically equivalent to one or several packets. When the grant start time arrives, the source uses it first to empty the priority queue. The residue is used to send quanta for as many backlogged flows as possible, serving these flows in round robin order.

Given the order in which grants from a multipath are assigned to sources (cf. Sec. \ref{sec:allocation}), this flow-aware allocation realizes distributed per-flow fair sharing of multipath bandwidth. This ensures (relatively low rate) streaming and conversational flows suffer negligible loss and delay while the flows for which the multipath is a bottleneck are constrained to a fair bandwidth share.

As in \cite{RR11}, we suppose packets can be fragmented as necessary to completely fill allocated grants. Destination nodes are assumed to convert the optical signal to electronic form and reconstitute the original packets. Each burst must be labelled with the identity of the source that emitted it to enable fragments to be reassembled. The destination recognizes its own packets from the address in the layer 2 or layer 3 header and discards the rest.

\section{DBA performance evaluation}
\label{sec:performance}
We evaluate the performance of a source cluster of $N$ nodes sharing a set of $M$ multipaths. Evaluation is mainly by simulation using a C program that implements the DBA algorithm described in Section \ref{sec:dba}.  

\subsection{Demand and system parameters}
\label{sec:parameters}
To simplify, we suppose flows are either backlogged (the multipath is their throughput bottleneck) or have a sufficiently low rate that their packets are always scheduled with priority. The backlogged flows have an exponentially distributed size of mean 10 MB and arrive according to a Poisson process. The non-backlogged flows have rate 2 Mb/s, arrive according to a Poisson process and have an exponential distribution of mean 30 s. Packets of all flows have constant size 1 KB. These somewhat artificial data are sufficient to illustrate the main performance characteristics of the multipath architecture.

Multipath capacity $C$ is 10 Gb/s. The maximum controller-source round trip time is 1 ms (a distance of 100 km). We assume the maximum grant delay $\tau$ is 1 ms yielding an offset $\Delta_O$ of 2 ms (cf. Sec. \ref{sec:allocation}). The transmitter guard time $\Delta_g$ is 100 ns.  Finally, the quantum size used by the PDRR scheduler is 1 KB, or one backlogged packet.

Traffic demand of source $i$, $1\le i \le N$,  for destinations served by multipath $j$, $1 \le j \le M$, is denoted $a_{ij}$. This is the product of flow arrival rate by mean flow size and is measured in bit/s. The proportion of demand due to backlogged flows is a simulation parameter. Source node $i$ is equipped with $t_i$ tunable transmitters. We generally assume symmetric traffic where all the $a_{ij}$ are equal.

\subsection{Traffic capacity}

Traffic capacity of the considered network is defined as the outer limit of the capacity region, i.e., the set of demands $a_{ij}$ for which source queues remain stable. This is a very important performance measure for high speed optical communications since, as confirmed later, transmission delays are negligibly small until demand approaches this limit.

The following are clearly necessary conditions for stability:
\begin{eqnarray}
\sum_{1\le i \le N} a_{ij}& < &C, \textrm{ for } 1\le j \le M, \label{eq:cap-path} \\ 
\sum_{1\le j \le M} a_{ij}& < &t_i C,  \textrm{ for } 1\le i \le N. \label{eq:cap-ttx} 
\end{eqnarray}

From results proved in \cite{RR11}, we know that an isolated multipath ($M=1$) is stable if and only if condition (\ref{eq:cap-path}) is satisfied. If multipaths are shared with independent, uncoordinated schedules for each, transmitter blocking means traffic capacity is smaller. For example, the analysis in \cite{RR11} shows that up to 37\% of capacity is lost when $t_i=1$. 
The algorithm in Section \ref{sec:allocation} avoids blocking since the controller coordinates grant allocations. 

\subsection{Throughput}

Throughput is defined as the ratio of the mean flow size to the mean flow response time. 
We first investigate the gain in throughput brought by coordinating allocations on different multipaths to avoid transmitter blocking. 

\begin{figure}[h]
\begin{center}
\includegraphics[width=.7\columnwidth]{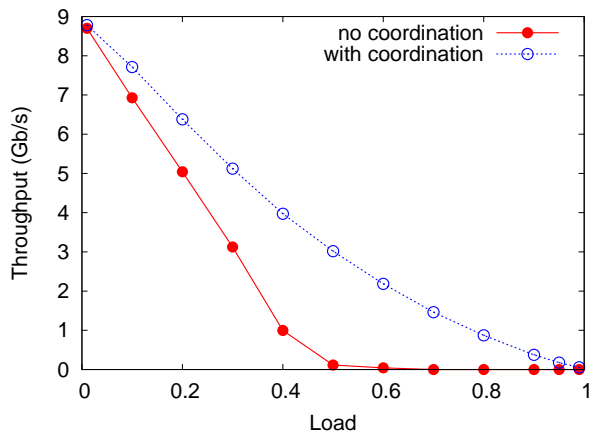}
\caption{\label{fig:comp} Mean throughput of backlogged flows for coordinated and non-coordinated allocations, $N=10$, $M=10$.}
\end{center}
\end{figure}

Figure \ref{fig:comp} shows throughput as a function of multipath load, $\sum_i a_{ij}/C$, for a cluster of 10 sources sharing 10 multipaths with symmetric traffic composed entirely of backlogged flows. Nodes have a single tunable transmitter. The most significant observation is that coordination indeed significantly increases traffic capacity (given by the load at which throughput goes to zero). As predicted by the analytical result in \cite{RR11}, the traffic capacity of a system with independent allocations is as low as 63\% of the raw multipath rate. 

The present DBA performs much better than TWIN at low load because of the use of out-of-band signalling \cite{RR11}. At very low load, with high probability, each flow is alone in the network and the only overhead is one guard time for each quantum of useful traffic.  With the parameters of Section \ref{sec:parameters} (1 quantum transmitted in 800 ns and $\Delta_g=100$ ns), this translates to a maximum throughput of 8.89 Gb/s, with and without coordination.
  
\begin{figure}[h]
\begin{center}
\includegraphics[width=.7\columnwidth]{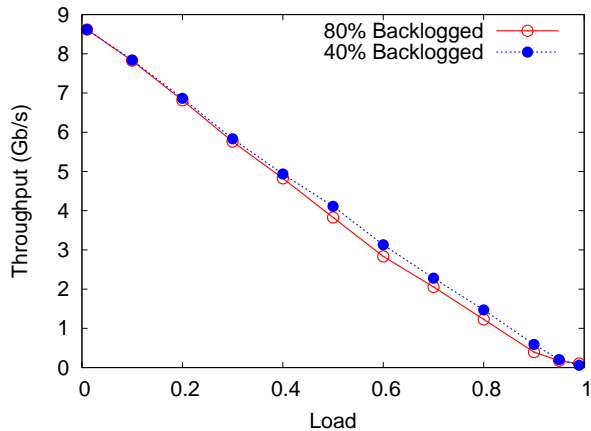}
\caption{\label{fig:thr} Mean throughput of backlogged flows depending on backlogged traffic proportion, $N=60$, $M=16$.}
\end{center}
\end{figure}

Figure \ref{fig:thr} shows that throughput depends significantly on the proportion of backlogged traffic. The network here is  the 60 node cluster with 16 multipaths  envisaged in Section \ref{sec:energy}. Traffic capacity is again maximal and does not depend significantly on the relative proportion of backlogged flow traffic. At medium loads, throughput increases somewhat as the proportion of backlogged traffic decreases.

\subsection{Delay}
The mean delay of packets in the priority queue is plotted against load for the $60\times 16$ network in Figure \ref{fig:delay}. This delay is representative of the per-packet latency experienced by conversational and streaming flows (supposed to have a relatively low intrinsic rate and therefore scheduled with priority). The figure confirms that such latency is minimally impacted by the multipath network. Delay attains 2.5 ms at very low load, decreases slightly as load increases, before increasing sharply as demand approaches capacity. Results for different proportions of backlogged flow traffic are not significantly different.

The delay at low load is composed as follows: a packet arrives; this arrival is signalled in the next report (average time .5 ms); the report arrives after a one-way propagation time (average .25 ms); a grant is formulated immediately and in view of (\ref{eq:s-update}) the sending time is set (2 ms offset -- one-way propagation time = 1.75 ms on average). The total of 2.5 ms could be reduced by implementing a faster signalling channel but not below the maximum round trip time of 1 ms.

Delay decreases with increasing load because the priority packets ``steal'' transmission time previously allocated for backlogged flows. Delay is only significant at loads greater than 95\%, confirming that traffic capacity is the essential performance criterion for high speed optical networks.

\begin{figure}[h]
\begin{center}
\includegraphics[width=.7\columnwidth]{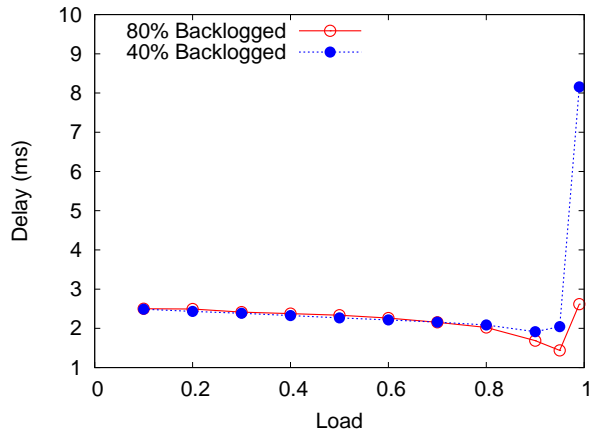}
\caption{\label{fig:delay} Delay performance of the multipath architecture.}
\end{center}
\end{figure}

\section{Conclusion}

\label{sec:conclusion}
The paper proposes a novel approach to realizing an optical wide area internetwork using technology that is available today, namely rapidly tunable transmitters, burst mode receivers and wavelength selective, merge and split capable, cross-connects. This technology is used to create multipoint-to-multipoint lightpaths called multipaths. 
Multipath bandwidth is shared between source-destination traffic flows under the control of a MAC protocol designed to avoid collisions at merge points and destinations. 

Multipaths can be used to interconnect the edge nodes, gateways and peering points of a countrywide ISP network, thereby eliminating the requirement for transit routers. In a considered case study, this would remove 92 routers from a 4~Tb/s network, leading to energy savings  of some 1.7~MW.

The MAC protocol is implemented by a controller common to the set of multipaths used by the edge nodes of a source cluster. It employs an asynchronous polling scheme like that defined in the EPON standard. Unlike EPON, however, multipath report and grant messages are not piggybacked but use dedicated control channels. The MAC layer is flexible in that a range of policies for fixing the size of the grants can be implemented.

We have proposed and evaluated a particular DBA algorithm that realizes cluster-wide per-flow fair sharing of multipath bandwidth. Simulation results demonstrate that, as long as the controller coordinates allocations over multipaths to avoid transmitter blocking,  the network has maximal traffic capacity limited only by the multipath bandwidth.  When demand is less than 95\% of capacity, packet delays for flows that are not backlogged, including conversational and streaming flows, are negligibly small.

The multipath extends the field of application of passive optical technology to large, wide area networks. We believe this approach has considerable potential beyond the context considered here. In future work, we intend to consider its adaptation to large tier-1 and data centre networks.

\bibliographystyle{IEEEtran}
\bibliography{IEEEabrv,mpmp}
\end{document}